\newcommand{\ccA}{{\mathscr A}}
\newcommand{\ccF}{{\mathscr F}}
\newcommand{\ccG}{{\mathscr G}}
\newcommand{\ccP}{{\mathscr P}}
\newcommand{\Ind}{{\mathds 1}}
\newcommand{\ind}[1]{\Ind_{\{#1\}}}
\newcommand{\bbF}{\mathbb{F}}
\newcommand{\bbG}{\mathbb{G}}
\newenvironment{enumeratei}
  {\begin{enumerate} }
  {\end{enumerate}}
\newtheorem{theorem}{Theorem}[section]
\newtheorem{lemma}[theorem]{Lemma}              
\newtheorem{proposition}[theorem]{Proposition}  
\theoremstyle{definition}
\newtheorem{definition}{Definition}[section]
\DeclareRobustCommand{\officialeuro}{%
	\ifmmode\expandafter\text\fi
	{\fontencoding{U}\fontfamily{eurosym}\selectfont e}}
\begin{document}

\title[Credit risk under ambiguity]{Ambiguity in defaultable term structure models.}

\author{Tolulope Fadina}
\address{Freiburg University, www.stochastik.uni-freiburg.de, Email: tolulope.fadina@stochastik.uni-freiburg.de}
\author {Thorsten Schmidt}
    \address{Freiburg University, www.stochastik.uni-freiburg.de and Freiburg Research Institute of Advanced Studies (FRIAS) and University of Strasbourg Institute for Advanced Study (USIAS). Email:  thorsten.schmidt@stochastik.uni-freiburg.de}
\thanks{Financial support by Carl-Zeiss-Stiftung is gratefully acknowledged. We thank Monique Jeanblanc for her generous support and helpful comments. }

\maketitle

\begin{abstract}
\noindent 
We introduce the concept of no-arbitrage in a credit risk market under
ambiguity considering an intensity-based framework. We assume the
default intensity is not exactly known but lies between an upper and
lower bound. By means of the Girsanov theorem, we start from the
reference measure where the intensity is equal to $1$ and construct the
set of equivalent martingale measures. From this viewpoint, the credit
risky case turns out to be similar to the case of drift uncertainty in
the $G$-expectation framework. Finally, we derive the interval of
no-arbitrage prices for general bond prices in a Markovian setting.

\noindent
{\bf Keywords:} Model ambiguity, default time, credit risk, no-arbitrage, reduced-form HJM models, recovery process.

\end{abstract}

\newcommand{\ulambda}{\underline{\lambda}}
\newcommand{\olambda}{\overline{\lambda}}
\newtheorem{thm}{Theorem}
\newtheorem{remark}[thm]{Remark}

\newtheorem{assumption}{Assumption}
\newtheorem{case}{Case}
\newtheorem{nota}[thm]{Notational convention}
\newcommand {\PP}{\mathbb P}

\newcommand {\QQ}{\mathbb Q}
\newcommand {\half}{\frac{1}{2}}

\newcommand {\fP}{\mathfrak{P}}
\newcommand\norm[1]{\left\lVert#1\right\rVert}

\section{Introduction}
A critical reflection on the current financial models reveals that models for financial markets require the precise knowledge of the underlying probability distribution, which is clearly unknown. 
Typically, the unknown distribution is either estimated by statistical methods or calibrated to given market data by means of a model for the financial market. 
The analysis of the recent financial crisis suggests that this introduces a large {\em model risk}. Already, \cite{Knight1921} pointed towards a formulation of risk which is able to treat such challenges in a systematic way. He was followed by \cite{Ellsberg1961}, who called random variables with known probability distribution \emph{certain}, and those where the probability distribution is not known as \emph{uncertain}. Following the modern literature in the area, we will call the feature that the probability distribution is not entirely fixed, \emph{ambiguity}.  This area has recently renewed the attention of researchers in mathematical finance to fundamental subjects such as arbitrage conditions, pricing mechanisms, and super-hedging. 
 Roughly speaking, ambiguity focuses on a set of probability measures whose role is to determine events that are relevant and those that are negligible.   
In this paper, we introduce the concept of ambiguity to \emph{term structure models}. The starting point for term structure models are typically  bond prices of the form
\begin{align}
\label{eq:bond_price}
    P(t,T) = e^{-\int_t^T f(t,u) du}, \qquad 0 \le t \le T
   \end{align}
where $(f(t,T))_{0\leq t\leq T}$ is the instantaneous forward rate and $T$ is the maturity time. This follows the seminal approach proposed in \cite{HJM}. The presence of credit risk\footnote{The risk that an agent fails to fulfil contractual obligations. Example of an instrument bearing credit risk is a corporate bond.} in the model introduces an additional factor known as the default time. In this setting, bond prices are assumed to be absolutely continuous with respect to the maturity of the bond. This assumption is typically justified by the argument that, in practice, only a finite number of bonds are liquidly traded and the full term structure is obtained by interpolation, thus is smooth. There are two classical approaches to model market default risk:  \emph{Structural approach} \cite{Merton1974} and the \emph{Reduced-form approach} (see for example, \cite{ArtznerDelbaen95, DuffieSchroderSkiadas96, Lando94} for some of the first works in this direction). 

In structural models of credit risk, the underlying state is the value of a firm's assets which is assume to be observable. Default happens at maturity time of the issued bond if the firm value is not sufficient to cover the liabilities. Hence, default is not a surprise. One exception is the structural model of \cite{Zhou}, in which the value of the firm’s assets is allowed to jump. In fact, the value of the firm’s assets is not observable. A credit event usually occurs in correspondence of a missed payment by a corporate entity and, in many cases, the payment dates or coupon dates are publicly known in advance. For example, the missed coupon payments by Argentina on a notional of ${\$}29$ billion (on July $30, 2014$), and by Greece on a notional of $\euro 1.5$ billion (on June $30, 2015$).

Reduced-form (HJM-type) models for defaultable term structure generally assume the existence of a default intensity which implies that default occurs with probability zero at a predictable time. Consequently, reduced-form models typically postulate that default time is totally inaccessible and prior to default, bond prices are absolutely continuous with respect to the maturity.  That is, under the assumption of zero \emph{recovery}\footnote{The amount that the owner of a defaulted claim receives upon default.}, credit risky bond prices $P(t,T)$ is given by 
\begin{align} \label{PtT1}  P(t,T) = \mathbb{I}_{\{ \tau > t\} } e^{-\int_t^T f(t,u) du}      
\end{align}
with $\tau$ denoting the \emph{random default time}. This approach has been studied in numerous works and up to a great level of generality, see \cite[Chapter 3]{eberlein-oezkan-03}, for an overview of relevant literature. 
The random default time $\tau$ is assumed to have an intensity process $\lambda$. For example, with constant intensity $\lambda$, default has a Poisson arrival at intensity $\lambda$ . More generally, for $\tau > t$, $\lambda_t$ may be viewed as the conditional rate of arrival of default at time $t$, given information up to that time. 
In a situation where the owner of a defaultable claim recovers part of its initial investment upon default, the associated survival process $\mathbb{I}_{\{ \tau > t\} }$ in \eqref{PtT1}, is replaced by a semimartingale.

Under ambiguity, we suggests there is some a prior information at hand which gives a upper and lower bounds on the \emph{intensity}. It seems that the market has  acknowledged uncertainty in this factors for a long time because there are important sources of additional information  available. The implicit assumption that the probability distribution of default is known is quite sensitive. Thus, we analyse our problem in a multiple priors model which describe uncertainty about the correct probability distribution. By means of the Girsanov theorem, we constructs the set of priors from the reference measure. The assumption is that all priors are equivalent, or at least absolutely continuous with respect to the reference measure.

In view of our framework, it is only important to acknowledge that a rating class provides an estimate of the one-year default probability in terms of a confidence interval. Also estimates for 3-, and 5-year default probabilities can be obtained from the rating migration matrix. Thus, leading to a certain amount of model risk.

The aim of this paper is to incorporate this model risk into our models. That is, to provide a framework for modeling defaultable term structure models taking into account model risk.

The main results are as follows: we obtain a necessary and sufficient condition for a reference probability measure to be a local martingale measure for financial market under ambiguity composed by all credit risky bonds with prices given by \eqref{PtT1}, thereby ensuring the absence of arbitrage in a sense to be precisely specified below. Furthermore, we consider the case where we have partial information on the amount that the owner of a defaulted claim receives upon default. Under the assumption of no-arbitrage, we derive the interval of bond prices in a Markovian setting.
 
This paper is set up as follows: the next section introduces homogeneous ambiguity, and its example. Section \ref{sec:NAA} introduces the fundamental theorem of asset pricing (FTAP) under homogeneous ambiguity. In section \ref{sec:Defaultable}, we derive the robust no-arbitrage conditions for defaultable term structure models with zero-recovery, and fractional recovery of market value. Section \ref{sec:Robust} discusses the bond pricing intervals under the assumption of no-arbitrage opportunities in a Markovian setting.

\section{Ambiguity}
\label{sec:Ambiguity}
 We consider throughout a fixed finite time horizon $T^*>0$. Let $(\Omega, \ccF)$ be a measurable space. 
By \emph{ambiguity} we refer to a set of  probability measures $\ccP$ on the measurable space $(\Omega, \ccF)$. In particular there is no fixed and known measure $P$. 
For credit risk, the most important case is the following case of homogeneous ambiguity: the ambiguity is called \emph{homogeneous} if there is a measure $P'$ such that  $P \sim P'$ for all $P \in \ccP$.
The reference measure $P'$ has the role of fixing events of measure zero for all probability measures under consideration. 
Intuitively, there is no ambiguity on these events of measure zero. We write $E'$  for  the expectation with respect to the reference measure $P'$. 

\begin{remark}
As a consequence of the equivalence of all probability measures ${P \in \ccP}$, all equalities and inequalities will hold almost-surely with respect to any probability measure $P \in \ccP$, or, respectively, to $P'$. 
\end{remark}

\subsection*{Ambiguity in intensity-based models}
Intensity-based models are one of the most frequently used approaches in credit risk, see \cite[Chapter 8]{BieleckiRutkowski2002} for an overview of relevant literature, and now we introduce ambiguity in this class. Consider a probability space $(\Omega,\ccG,P')$ supporting a $d$-dimensional Brownian motion $W$ with canonical and augmented\footnote{Augmentation can be done in a standard fashion with respect to $P'$.} filtration $\mathbb{F}= (\ccF_t)_{0\leq t \leq T^*}$ and a standard exponential random variable $\tau$, independent of $\ccF_{T^*}$, that is, $P'( t< \tau|\ccF_t)= \exp(- t)$, $0\leq t \leq T^*$. The full-filtration $\mathbb G=(\ccG_t)_{0 \le t \le T^*}$ is obtained by a progressive enlargement\footnote{We refer to \cite{AksamitJeanblanc} for further literature.} of $\mathbb F$ with $\tau$, i.e.
\begin{align*}
	\ccG_t = \bigcap_{\epsilon >0}\sigma( \mathbbm{1}_{\{t \ge \tau\}} , W_s: 0 \le s \le t+\epsilon), \qquad 0 \le t \le T^*.
\end{align*}
We assume that $\ccG=\ccG_{T^*}$. By means of the Girsanov theorem, we explicitly construct the measures $P^\lambda$ where under $P^\lambda$, the default time $\tau$ has intensity $\lambda$. In this regard, consider progressively measurable and positive processes $\lambda$ and define density processes $Z^{\lambda}$ by  
\begin{align}\label{def:density}
	Z^{\lambda}_t := \begin{cases}
	\exp\Big( \int_0^t (1-\lambda_s) ds \Big),  & t < \tau \\
	\lambda_\tau \exp\Big( \int_0^\tau (1-\lambda_s) ds \Big) & t \ge \tau.
	\end{cases}
\end{align}   
Note that $Z^\lambda$ is indeed a $\bbG$-martingale and corresponds to a Girsanov-type change of measure (see Theorem VI.2.2 in \cite{Bremaud1981}).
If moreover $E'[Z^{\lambda}_{T^*}]=1$ we may define the measure $P^\lambda \sim P'$ as 
\begin{align} \label{Girsanov}
P^\lambda(A):= {E'}(\mathbbm{1}_{A}Z^{\lambda}_{T^*}) \quad \forall A\in \ccG.\end{align}

The degree of ambiguity in this setting will be measured in terms of an interval $[\underline{\lambda}, \overline{\lambda}] \subset (0,\infty) $ where $ \underline{\lambda}$ and $\overline{\lambda}$ denote lower (upper) bounds in the default intensity.  
We define the set of density generators $\bar H$ by
$$ \bar H := \{ \lambda: \lambda \text{ is }\mathbb{F}\text{-predictable and } \underline{\lambda} \le \lambda_t \le \overline \lambda, \quad t \in [0,T^*] \}.$$

\noindent Additionally, we denote the set of probability measures under ambiguity on the default intensity by
\begin{align}
	\bar\ccP :=\{ P^\lambda: \lambda \in \bar H \}.
\end{align}

\begin{remark}This setting can easily be extended to time varying boundaries\footnote{See \cite{Riedel} for a discussion of the concept of dynamic consistency in dynamic models.} $[\underline{\lambda}(t),\overline{\lambda}(t)]$, $0 \le t \le T^*$. Also the extension to random processes is possible, however at the expense of some delicate measurability issues, confer \cite{Marcel}.
\end{remark}

\begin{lemma}\label{lemma1}
$\bar\ccP$ is a convex set.  
\end{lemma}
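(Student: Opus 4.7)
The plan is to show, for any $P^{\lambda_1}, P^{\lambda_2} \in \bar\ccP$ and $\alpha \in (0,1)$, that the mixture $Q := \alpha P^{\lambda_1} + (1-\alpha) P^{\lambda_2}$ can be written as $P^\lambda$ for some $\lambda \in \bar H$. Since $Q \sim P'$ with density process $Z_t := \alpha Z^{\lambda_1}_t + (1-\alpha) Z^{\lambda_2}_t$ (a convex combination of $\mathbb{G}$-martingales starting at $1$), the task reduces to matching $Z$ to the form \eqref{def:density} for an admissible generator.

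My proposal is the pathwise weighted average
\begin{align*}
\lambda_t := \frac{A_t \lambda_{1,t} + B_t \lambda_{2,t}}{A_t + B_t},\qquad A_t := \alpha\, e^{\int_0^t (1-\lambda_{1,s})\,ds},\quad B_t := (1-\alpha)\, e^{\int_0^t (1-\lambda_{2,s})\,ds}.
\end{align*}
The processes $A, B$ are continuous, strictly positive and $\mathbb{F}$-adapted, while $\lambda_1, \lambda_2$ are $\mathbb{F}$-predictable, so $\lambda$ is $\mathbb{F}$-predictable; and as a convex combination of $\lambda_{1,t}, \lambda_{2,t}$, the process $\lambda_t$ inherits the bounds $\ulambda \le \lambda_t \le \olambda$, confirming $\lambda \in \bar H$.

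To identify $Z^\lambda \equiv Z$, I split according to \eqref{def:density}. A direct differentiation yields $1 - \lambda_t = \tfrac{d}{dt}\log(A_t + B_t)$, so on $\{t < \tau\}$,
\begin{align*}
Z^\lambda_t = \exp\Bigl(\int_0^t (1-\lambda_s)\,ds\Bigr) = A_t + B_t = Z_t.
\end{align*}
On $\{t \ge \tau\}$ the definition of $\lambda_\tau$ is arranged so that $\lambda_\tau(A_\tau + B_\tau) = A_\tau \lambda_{1,\tau} + B_\tau \lambda_{2,\tau}$, which after substitution in \eqref{def:density} gives $Z^\lambda_t = \alpha Z^{\lambda_1}_t + (1-\alpha) Z^{\lambda_2}_t = Z_t$. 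The one point deserving care is the $\mathbb{F}$-predictability of $\lambda$, which works here precisely because the weights $A, B$ and the ingredients $\lambda_{1}, \lambda_{2}$ involve no information from $\tau$; this is exactly the simplification that the remark following the definition of $\bar H$ warns would fail for random, $\mathbb{G}$-adapted bounds, and would then require replacing $A, B$ by suitable $\mathbb{F}$-optional projections.
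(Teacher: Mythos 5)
Your proof is correct and follows essentially the same route as the paper: the paper defines the very same mixture intensity implicitly through $\int_0^t \lambda_s\,ds := t-\log\bigl[\alpha e^{\int_0^t(1-\lambda'_s)\,ds}+(1-\alpha)e^{\int_0^t(1-\lambda''_s)\,ds}\bigr]$, which upon differentiation is exactly your weighted average $\lambda_t=(A_t\lambda_{1,t}+B_t\lambda_{2,t})/(A_t+B_t)$, and then matches $Z^\lambda_{T^*}$ to $\alpha Z^{\lambda'}_{T^*}+(1-\alpha)Z^{\lambda''}_{T^*}$ just as you do. If anything, your version is tighter on one point: you verify $\underline{\lambda}\le\lambda_t\le\overline{\lambda}$ pointwise as a convex combination, whereas the paper only bounds the integral $\int_0^t\lambda_s\,ds$ between $\underline{\lambda}\,t$ and $\overline{\lambda}\,t$, which by itself does not yield the pointwise bound required for membership in $\bar H$.
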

\begin{proof}
Consider $P^{\lambda'}, P^{\lambda''} \in \bar\ccP$ and $\alpha \in (0,1)$. Then,
\begin{align*}
\alpha P^{\lambda'}(A) + (1-\alpha)  P^{\lambda''}(A) &= E'\big[ \Ind_A (\alpha Z_{T^*}^{\lambda '} + (1-\alpha) Z_{T^*}^{\lambda''})\big].
\end{align*}
Now consider the (well-defined) intensity $\lambda$, given by
$$ \int_0^t \lambda_s ds := t-\log\Big[ \alpha e^{\int_0^t (1-\lambda'_s)ds } + (1-\alpha) e^{\int_0^t (1-\lambda_s'')ds} \Big], $$
$0 \le t \le T^*$. Then,
$$ 
	\alpha Z_{T^*}^{\lambda '} + (1-\alpha) Z_{T^*}^{\lambda''} = Z_{T^*}^\lambda $$
such that by \eqref{Girsanov}, $P^\lambda \sim P'$ refers to a proper change of measure. We have to check that $\lambda \in \bar H$, which means that $\lambda$ satisfies $\lambda \in [\underline \lambda, \overline \lambda]$, $0 \le t \le T^*$: note that
\begin{align*}
\lefteqn{ t-\log\Big[ \alpha e^{\int_0^t (1-\lambda'_s)ds } + (1-\alpha) e^{\int_0^t (1-\lambda_s'')ds} \Big]} \hspace{2cm} \\
& \le t-\log\Big[ \alpha e^{\int_0^t (1-\overline \lambda )ds } + (1-\alpha) e^{\int_0^t (1-\overline \lambda)ds} \Big] \\
& \le t- t (1-\overline \lambda) = \overline \lambda t,
\end{align*}
and $\lambda_s \le \overline \lambda$ follows. In a similar way we obtain $\lambda_s \ge \underline \lambda$.
\end{proof}

\begin{remark}
 Intuitively, the requirement $\underline{\lambda}>0$ states that there is always a positive risk of experiencing a default, which is economically reasonable. Technically it has the appealing consequence that all considered measures in $\bar\ccP$ are equivalent.
\end{remark}
It turns out that the set of possible densities will play an important role in connection with measure changes. In this regard, we define  \emph{admissible measure changes} with respect to $\bar \ccP$ by
$$ \bar \ccA:= \{ \lambda^*: \lambda^* \text{ is }\bbF\text{-predictable and }E^{P}[Z^{\lambda^*}_{T^*}]<\infty\text{ for all }P \in \bar \ccP\}. $$
The associated Radon-Nikodym derivatives $Z^{\lambda^*}_{T^*}$ for $\lambda^*\in\bar \ccA$ are the possible Radon-Nikodym derivatives for equivalent measure changes when starting from a measure $P\in\bar \ccP$.

\section{Absence of arbitrage under homogeneous ambiguity}
\label{sec:NAA}
Absence of arbitrage and the respective generalization, \emph{no free lunch} (NFL), \emph{no free lunch with vanishing risk} (NFLVR), are well-established concept under the assumption that the probability measure is known and fixed. Here we give a small set of sufficient conditions for absence of arbitrage extended to the setting with homogeneous ambiguity and directly formulated in terms of bond markets. 

For the beginning we consider a bond market consisting only of finitely many traded bonds, small market, an extension to a more general case follows below.

Consider, as previously, a (general) set of probability measures $\ccP$ on the measurable space $(\Omega,\ccG)$ where $P'$ is the dominating measure, i.e.\ $P\sim P'$ for all $P \in \ccP$. Recall that, there is a filtration $\bbG$ satisfying the usual conditions with respect to $P'$.
Discounted price processes are given by a finite dimensional semimartingale $X$ with respect to $\bbG$. The semimartingale property holds equivalently in any of the filtration $\mathbb{G}_+$ or the augmentation of $\mathbb{G}_+$, see \cite[Proposition 2.2]{NeufeldNutz:Measurability}. It is well-known that then $X$ is a semimartigale for all $P\in \ccP$.

Self-financing trading strategies are given by predictable and $X$-integrable processes $\Phi$ and the discounted gains process is 
given by the stochastic integral of $\Phi$ with respect to $X$, as denoted by
$$  (\Phi \cdot X)_t= \int_0^t \Phi_u  dX_u.  $$
An \emph{arbitrage} is a strategy which starts from zero initial wealth, has non-negative pay-off under all possible future scenarios, hence for all $P \in \ccP$, where there is at least one $P$ such that the pay-off is positive. This is formalized in the following definition, compare for example \cite{Vorbrink14}. As usual a trading strategy is $a$-admissible, if $(\Phi \cdot X)_t \geq -a$ for all $0 \le t \le  T^*$.
 \begin{definition}\label{def:arbitrage}
 A self-financing trading strategy $\Phi$ is called \emph{$\ccP$-arbitrage} if it is $a$-admissible for some $a>0$ and
 \begin{itemize}
 \item $(\Phi \cdot X)_{T^*}  > 0$ \hspace{11.9mm} for all $P \in  \ccP$,
 \item $P((\Phi \cdot X)_{T^*} > 0)  > 0$ \ for every $P \in \ccP.$
 \end{itemize}
 \end{definition}
This describes the possibility of getting arbitrarily rich with positive probability by taking small or vanishing risk. A probability measure $Q$ is called \emph{local martingale measure}, if $X$ is a $Q$-local martingale.

It is well-known that \emph{no arbitrage} or, more precisely, \emph{no free lunch with vanishing risk} (NFLVR) in a general semimartingale market is equivalent to the existence of an equivalent local martingale measure (ELMM), see \cite{DelbaenSchachermayer1994,DelbaenSchachermayer1998}. The technically difficult part of this result is to show that a precise criterion of absence of arbitrage implies the existence of an ELMM. In the following we will not aim at such a deep result under ambiguity, but utilize the easy direction, namely that existence of an ELMM implies the absence of arbitrage as formulated below.

The definition of ELMMs with respect to $\ccP$ simplifies because we are considering the homogenous case with dominating measure $P'$.
\begin{definition}
The measure $Q$ is called \emph{equivalent local martingale measure}, if  $Q \sim P'$  and $Q$ is a local martingale measure. 
\end{definition} 

From the classical \emph{fundamental theorem of asset pricing} (FTAP), the following result follows easily.
\begin{theorem}
\label{thm:FTAP-NFLV_homogeneous}
If there exists an equivalent local martingale measure $Q$ for the homogeneous family $\ccP$, then there is no arbitrage in the sense of Definition \ref{def:arbitrage}.
\end{theorem}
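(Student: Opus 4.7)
The plan is to reduce the statement to the classical single-measure criterion, using the homogeneity of $\ccP$ in an essential way. Suppose for contradiction that $\Phi$ is a $\ccP$-arbitrage, so that $\Phi$ is $a$-admissible for some $a>0$, the terminal gain $(\Phi \cdot X)_{T^*}$ is (almost surely) nonnegative under every $P \in \ccP$, and $P((\Phi \cdot X)_{T^*}>0)>0$ for every $P \in \ccP$.

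Next I would exploit homogeneity: since $P \sim P'$ for every $P \in \ccP$ and $Q \sim P'$ by assumption, all three probability measures $P$, $P'$, and $Q$ share the same null sets. Therefore the two arbitrage conditions, which were stated uniformly over $\ccP$, transfer verbatim to statements under $Q$: namely $(\Phi \cdot X)_{T^*} \geq 0$ $Q$-a.s. and $Q((\Phi \cdot X)_{T^*}>0) > 0$. In other words, $\Phi$ would be a classical arbitrage in the single-measure market governed by $Q$.

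To close with a contradiction, I would invoke the standard easy half of the classical FTAP. Since $X$ is a $Q$-local martingale and $\Phi$ is predictable and $X$-integrable with $(\Phi \cdot X)_t \geq -a$ for all $t \in [0,T^*]$, the Ansel--Stricker lemma applies and yields that the stochastic integral $(\Phi \cdot X)$ is a $Q$-supermartingale. Hence
\begin{equation*}
	E^Q\big[(\Phi \cdot X)_{T^*}\big] \;\leq\; (\Phi \cdot X)_0 \;=\; 0,
\end{equation*}
whereas the previous step forces $E^Q[(\Phi \cdot X)_{T^*}] > 0$, a contradiction.

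The statement therefore really is a direct corollary of the classical result; there is no substantive obstacle. The only point worth highlighting is the role of homogeneity: it is precisely what allows the ``for every $P \in \ccP$'' quantifier in Definition \ref{def:arbitrage} to be collapsed into a single $Q$-a.s. statement, after which the Ansel--Stricker supermartingale argument finishes the proof.
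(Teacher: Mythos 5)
Your proof is correct and follows essentially the same route as the paper: both reduce the multi-prior arbitrage to a single-measure arbitrage under $Q$ by using that all measures in $\ccP$ and $Q$ share the same null sets, and then derive a contradiction with the classical single-measure theory. The only difference is that you make the final step self-contained via the Ansel--Stricker supermartingale argument, whereas the paper simply cites the classical FTAP as a black box.
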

\begin{proof}
Indeed, assume there is an arbitrage $\Phi$ with respect to some measure $P\in \ccP$.
By definition,  $Q\sim P$ and so $Q$ is an ELMM for $P$. But then $\Phi$ would be an arbitrage strategy with existing ELMM $Q$, a contradiction to the classical FTAP.
\end{proof}

\section{Defaultable term structures under ambiguity}
\label{sec:Defaultable}
In this section we consider dynamic term structure modelling under default risk when there is ambiguity about the default intensity. 
The relevance of this issue has, for example, already been reported in \cite{Riedel2015}. Here we take this as motivation to propose a precise framework taking ambiguity on the default intensity into account. We continue to work in the setting introduced in Section \ref{sec:Ambiguity}.

\subsection{Dynamic term structures}
We define the \emph{default indicator process}  $H$  by 
$$ H_t =  \mathbbm{1}_{\{ t \geq \tau \}}, \qquad 0 \le t \le T^*. $$
The associated \emph{survival process}  is $1- H.$
A credit risky bond with a maturity time $T$ is a contingent claim promising to pay one unit of currency at $T$. We denote the price of such a bond  at time $t \leq T$ by $P(t,T)$. If no default occurs prior to $T$, $P(T,T)=1$. We will first consider \emph{zero recovery}, i.e., assume that the bond loses its total value at default. Then $P(t,T)=0$ on $\{t \geq \tau\}$.  

Besides zero recovery we only make the weak assumption that bond-prices prior to default are positive and absolutely continuous with respect to maturity $T$. This follows the well-established approach by \cite{HJM}. In this regard, assume that 
\begin{equation}
\label{defaultable price}
P(t, T) = \mathbbm{1}_{\{\tau > t\}} \exp\left(-\int_{t}^{T} f(t,u)du \right) \qquad  0 \leq t \leq T \le T^*.
\end{equation}
The initial forward curve $T \mapsto f(0,T)$ is then assumed to be sufficiently integrable and the \emph{forward rate processes} $f(\cdot,T)$ follow It\^o processes satisfying 
$$f(t,T)= f(0,T)+ \int_{0}^{t}a (s,T)ds + \int_{0}^{t}b (s,T)dW_s,$$
for $0 \leq t \leq T \le T^*$. In principle, $T \ge T^*$ would be possible to consider without additional difficulties.
\begin{assumption}
\label{parameter}
We require the following technical assumptions:
\begin{itemize}
\item[(i)] the initial forward curve is measurable, and integrable on $[0,T^*]$:
$$\int_{0}^{T^*}|f(0,u)|du< \infty,$$
\item[(ii)] the drift parameter $a(\omega, s, t)$ is $\mathbb{R}$-valued $\mathcal{O} \otimes \mathcal{B}$-measurable and integrable on $[0,T^*]$:
$$ \int_{0}^{T^*} \int_{0}^{T^*} |a (s,t)| ds dt < \infty, $$
\item[(iii)] the volatility parameter $b(\omega, s, t)$ is $\mathbb{R}^d$-valued, $\mathcal{O} \otimes \mathcal{B}$-measurable, and
$$\sup_{s,t \leq T^*}\| b (s,t)\| < \infty . $$
\end{itemize}
\end{assumption}

Set for $0 \le t \le T \le T^*$,
\begin{align*}
\overline{a}(t,T) &= \int_{t}^{T} a (t,u)du, \\
\overline{b}(t,T) &= \int_{t}^{T} b (t,u)du.
\end{align*}

\begin{lemma} Under Assumption \ref{parameter} it holds that,  
$$ \int_{t}^{T} f (t,u)du =  \int_{0}^{T} f(0,u)du + \int_{0}^{t} \overline{a}(\cdot, u) du + \int_{0}^{t} \overline{b}(\cdot,u) dW_u - \int_{0}^{t} f (u,u)du $$
for $0 \le t \le T \le T^*$, almost surely.
\end{lemma}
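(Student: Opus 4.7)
The plan is to insert the It\^o representation of $f(t,u)$ into $\int_t^T f(t,u)\,du$, apply (classical and stochastic) Fubini to swap the order of integration, and then rearrange using the definitions of $\overline a$, $\overline b$, and $f(u,u)$.

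Concretely, I would substitute to obtain
\[
 \int_t^T f(t,u)\,du = \int_t^T f(0,u)\,du + \int_t^T\!\!\int_0^t a(s,u)\,ds\,du + \int_t^T\!\!\int_0^t b(s,u)\,dW_s\,du.
\]
Classical Fubini applies to the second term by Assumption \ref{parameter}(ii), and the stochastic Fubini theorem applies to the third by Assumption \ref{parameter}(iii) together with the $\mathcal{O}\otimes\mathcal{B}$-measurability of $b$. After swapping, for $s\in[0,t]$ I split $\int_t^T a(s,u)\,du = \overline a(s,T) - \int_s^t a(s,u)\,du$ (and analogously for $b$), producing the leading terms $\int_0^t \overline a(s,T)\,ds$ and $\int_0^t \overline b(s,T)\,dW_s$ that match the claimed right-hand side. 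A second application of Fubini on the triangular region $\{0\le s\le u\le t\}$ rewrites the leftover pieces as $\int_0^t\!\int_0^u a(s,u)\,ds\,du$ and $\int_0^t\!\int_0^u b(s,u)\,dW_s\,du$. Combining with $\int_t^T f(0,u)\,du = \int_0^T f(0,u)\,du - \int_0^t f(0,u)\,du$ and recognising that
\[
 f(u,u) = f(0,u) + \int_0^u a(s,u)\,ds + \int_0^u b(s,u)\,dW_s,
\]
the residual double integrals collapse precisely into $-\int_0^t f(u,u)\,du$, yielding the identity.

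The principal technical point is the application of the stochastic Fubini theorem. Assumption \ref{parameter}(iii) supplies both the uniform bound $\sup_{s,t\le T^*}\|b(s,t)\|<\infty$ and the joint $\mathcal{O}\otimes\mathcal{B}$-measurability needed by standard versions of the theorem (e.g.\ Protter, Chapter IV). The identity is first established $P'$-almost surely; because every $P\in\bar\ccP$ is equivalent to $P'$ by construction (see Lemma \ref{lemma1} and the preceding remarks), it then holds almost surely under every admissible prior in $\bar\ccP$, which is the sense the lemma asserts.
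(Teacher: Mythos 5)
Your argument is correct and is exactly the standard Fubini/stochastic--Fubini computation that the paper itself invokes by citation (``this follows as in \cite{HJM}, see Lemma 6.1 in \cite{Filipovic2009}'') rather than writing out. The decomposition over the triangle $\{0\le s\le u\le t\}$ and the identification of the residual double integrals with $-\int_0^t f(u,u)\,du$ match that reference, and Assumption \ref{parameter} supplies precisely the integrability and measurability needed for both Fubini applications.
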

This follows as in \cite{HJM}, see for example Lemma 6.1 in \cite{Filipovic2009}.

\subsection{Absence of arbitrage with ambiguity on the default intensity}

We start by stating the classical ingredient to absence of arbitrage in intensity-based dynamic term structure models. Note that, under $P^{\lambda}$, the compensator or the dual predictable projection $H^p$ of $H$ is given by $H^p_t = \int_0^{t\wedge \tau} \lambda_s ds$. By the Doob-Meyer decomposition,
$$ M^{\lambda} := H - H^p, \qquad 0 \le t \le T^* $$
is $P^{\lambda}$-martingale. 

For discounting, we use the bank account. Its value is given by a stochastic process starting with $1$ and we assume a short-rate exists, i.e., the value process of the bank account is $\gamma(t) = \exp(\int_0^t r_s ds)$ with an $\bbF$-predictable process. We assume that $P'(\int_0^{T^*} r_s ds < \infty)=1$. Then, we obtain the following result.

\begin{proposition} 
\label{pro:no-arbitrage}
Consider a measure $Q$ on $(\Omega,\ccG_{T^*})$ with $Q \sim P'$. Assume that Assumption \ref{parameter} holds and $M^\lambda$ is a $Q$-martingale. Then $Q$ is a local martingale measure if and only if 
\begin{enumeratei}
\item $f(t,t)= r_t + \lambda_t, $
\item the drift condition 
	\begin{align*}
 		\bar{a} (t,T)= \frac{1}{2} \left\| \overline{b}(t,T) \right\|^2,
 	\end{align*} 
 holds $dt\otimes dQ$-almost surely for $0 \le t \le T \le T^*$ on $\{\tau >t\}$.
\end{enumeratei}   
\end{proposition}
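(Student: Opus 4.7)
The plan is to derive the $Q$-semimartingale decomposition of the discounted bond price $\widetilde P(t,T) := P(t,T)/\gamma(t)$ and impose that its drift vanishes. Two equations will emerge: one independent of $T$ (yielding condition (i) by choosing $T = t$) and one carrying the $T$-dependence (yielding the HJM-type drift condition (ii)).

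Set $E(t,T) := \exp(-\int_t^T f(t,u)\,du)$ and $Y(t,T) := \int_t^T f(t,u)\,du$. By the previous lemma,
\begin{align*}
    dY(t,T) = \bigl( \overline{a}(t,T) - f(t,t) \bigr)\,dt + \overline{b}(t,T)\,dW_t,
\end{align*}
and an application of It\^o's formula to the continuous process $E(\cdot,T)$ yields
\begin{align*}
    dE(t,T) = E(t,T) \Bigl[ \bigl( f(t,t) - \overline{a}(t,T) + \tfrac12 \|\overline{b}(t,T)\|^2 \bigr) dt - \overline{b}(t,T) dW_t \Bigr].
\end{align*}
Since $E(\cdot,T)$ is continuous, integration by parts for $P(t,T) = (1-H_t)E(t,T)$ carries no bracket term. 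Using $dH_t = dM^\lambda_t + \lambda_t (1-H_{t-})\,dt$ and the continuity of $1/\gamma$, one obtains
\begin{align*}
    d\widetilde P(t,T) ={}& \frac{(1-H_{t-})E(t,T)}{\gamma(t)} \Bigl[ \bigl( f(t,t) - r_t - \lambda_t - \overline{a}(t,T) + \tfrac12 \|\overline{b}(t,T)\|^2 \bigr) dt - \overline{b}(t,T) dW_t \Bigr] \\
    &- \frac{E(t,T)}{\gamma(t)}\, dM^\lambda_t.
\end{align*}

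By hypothesis the $dM^\lambda$-integral is a $Q$-local martingale, and since $Q \sim P'$ Girsanov's theorem provides a $Q$-Brownian motion $W^Q = W - \int_0^\cdot \beta_s\,ds$ for some predictable $\beta$; absorbing $\overline{b}(t,T)\beta_t$ into the drift (which amounts to reinterpreting $\overline{a}$ as the $Q$-drift of $Y(\cdot,T)$, consistent with the $dt\otimes dQ$-a.s.\ quantification in (ii)), $\widetilde P(\cdot,T)$ is a $Q$-local martingale iff the $dt$-coefficient vanishes on $\{\tau > t\}$ for every $T \in [0, T^*]$. Setting this coefficient to zero gives
\begin{align*}
    f(t,t) - r_t - \lambda_t = \overline{a}(t,T) - \tfrac12 \|\overline{b}(t,T)\|^2.
\end{align*}
The left-hand side does not depend on $T$; specializing to $T = t$, where $\overline{a}(t,t) = 0 = \overline{b}(t,t)$, gives condition (i), and re-substitution yields (ii). Conversely, if (i) and (ii) hold, the $dt$-term disappears in the above decomposition and only local-martingale terms remain, so $\widetilde P(\cdot,T)$ is a $Q$-local martingale.

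The main technical obstacle is a clean handling of the Girsanov step: since $Q$ is a general equivalent measure, one must separate the change of drift affecting $W$ from the one affecting $H$, the latter being already controlled by the assumption that $M^\lambda$ is a $Q$-martingale. The continuity of $E(\cdot,T)$ avoids bracket terms, and Assumption~\ref{parameter} guarantees that all stochastic integrals are well-defined and the localization argument is legitimate.
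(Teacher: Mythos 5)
Your proof follows essentially the same route as the paper's: integration by parts for $P(t,T)=(1-H_t)F(t,T)$, It\^o's formula for the exponential of $\int_t^T f(t,u)\,du$, vanishing of the predictable finite-variation part of the discounted price, and specialization to $T=t$ to separate condition (i) from the drift condition (ii). Your additional remark that a general equivalent $Q$ may also change the drift of $W$ (so that $\overline{a}$ must be read as the $Q$-drift) is a point the paper's proof passes over silently, but it does not alter the structure of the argument.
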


\begin{proof}
We set $E= 1-H$ and $F(t,T)= \exp\left(-\int_{t}^{T} f(t,u)du \right)$. Then \eqref{defaultable price}
can be written as $P(t, T)= E(t) F(t,T).$
Integrating by part yields 
$$dP(t, T) = F(t-,T)dE(t) + E(t-) dF(t,T)+ d[E,F(\cdot,T)]_t. $$

For $\{t < \tau \}$,  
\begin{align*}
dP(t, T){}&= P(t-,T)\left(-\lambda_t dt + \left( f(t,t)+ \frac{1}{2}\left\| \overline{b}(t,T) \right\|^2 -  \overline{a}(t,T) \right)dt\right)\\ 
& + P(t-,T)\left(d M^\lambda + \overline{b}(t,T)dW_t \right).
\end{align*}
The discounted bond price process is a local martingale if and only if the predictable part in the semimartingale decomposition vanishes, i.e., 
\begin{equation}
 f(t,t) - r_t - \lambda_t- \bar{a}(t,T) + \frac{1}{2} \left\| \overline{b}(t,T) \right\|^2 = 0.
 \end{equation}
 Letting $T=t$ we obtain (i) and (ii) and the result follows.
\end{proof}

Next, we derive the no-arbitrage conditions for the forward rate in term of the intensity and the short rate, and also the conditions for the drift and volatility parameters, under homogeneous ambiguity. Set $\lambda^{*}_t:= f(t,t)-r_t$, for $t\in [0,T^*]$. Consider a real-valued, measurable, $\mathbb{F}$-progressive process $\theta=(\theta_t)_{t \geq 0}$ such that the process $z^{\theta} =(z^{\theta}_t)_{0 \le t \le T^*}$ is given as the unique strong solution of  
$$dz^{\theta_t} = - \theta_t z^{\theta_t}dW_t, \quad z^{\theta}_0=1.$$
We assume that $\theta$ is sufficiently integrable, such that $z^\theta$ is a $P'$-martingale
\begin{theorem}
\label{thm:no-arbitrage}
Under Assumption \ref{parameter}, the discounted bond prices are local martingales, if and only if the following conditions are satisfied on $\{\tau > t\}$:  
\begin{enumeratei}
\item there exists an $\bbF$-progressive $\theta^*$ such that $E'[z^{\theta^*}_{T^*}]=1$,
\item the drift condition 
$$
\bar a(t,T) = \half \parallel \bar b(t,T) \parallel^2 - \bar b(t,T) \theta^*_t,
$$
holds $dt\otimes dP'$-almost surely on $\{t<\tau\}$.
\end{enumeratei}
Then there exists an ELMM with respect to $\bar \ccP$.
\end{theorem}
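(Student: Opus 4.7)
The plan is to combine Proposition~\ref{pro:no-arbitrage} with a two-fold Girsanov change of measure: one that adjusts the default intensity via a density of the form $Z^{\lambda^{*}}$ from \eqref{def:density}, and one that tilts the driving Brownian motion via $z^{\theta^{*}}$. Under $P'$, the default time $\tau$ has intensity $1$ and $W$ is a driftless Brownian motion, whereas Proposition~\ref{pro:no-arbitrage} characterises local martingale measures $Q \sim P'$ in terms of the $Q$-intensity of $\tau$ and the $Q$-drift of $f(\cdot,T)$. The point of the theorem is to re-express those two conditions in the language of the $P'$-dynamics through the Radon--Nikodym density, which is exactly what (i) and (ii) encode.

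For the sufficiency direction, I would assume (i) and (ii) and set $\lambda^{*}_t := f(t,t) - r_t$ on $\{t<\tau\}$, as will be forced by Proposition~\ref{pro:no-arbitrage}(i). From (i), $z^{\theta^{*}}$ is a $P'$-martingale; since it is $\bbF$-adapted and $\tau$ is independent of $\ccF_{T^{*}}$ under $P'$, the product $L := Z^{\lambda^{*}} z^{\theta^{*}}$ is a $\bbG$-martingale (provided $\lambda^{*}\in\bar\ccA$). Define $Q$ via $dQ/dP' = L_{T^{*}}$; then $Q \sim P'$. Under $Q$, the intensity of $\tau$ equals $\lambda^{*}$ by construction of $Z^{\lambda^{*}}$ in \eqref{def:density}, and by Girsanov's theorem for Brownian motion the process $\tilde W_t := W_t + \int_0^t \theta^{*}_s\, ds$ is a $Q$-Brownian motion, so the $Q$-drift of $f(\cdot,T)$ becomes $a(t,T) - b(t,T)\theta^{*}_t$. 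Hypothesis (ii), integrated over $[t,T]$, then delivers exactly the drift condition of Proposition~\ref{pro:no-arbitrage}(ii) under $Q$, so $Q$ is an ELMM.

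For the converse, if the discounted bond prices are local martingales under some $Q \sim P'$, I would decompose the $\bbG$-density process of $Q$ with respect to $P'$ as a product $Z^{\lambda} z^{\theta^{*}}$ using a martingale representation argument for $(W, M^{\lambda})$ in the progressively enlarged filtration $\bbG$: the $\bbF$-adapted factor identifies $\theta^{*}$ via Girsanov and automatically satisfies $E'[z^{\theta^{*}}_{T^{*}}]=1$, giving (i). Proposition~\ref{pro:no-arbitrage}(i) then forces $\lambda_t = f(t,t) - r_t$ on $\{t<\tau\}$, and its drift condition, after reverting the Brownian Girsanov transform, gives (ii) on $\{t<\tau\}$.

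The main obstacle I foresee is showing that $L = Z^{\lambda^{*}} z^{\theta^{*}}$ is a true $\bbG$-martingale rather than merely a local one, and that the two Girsanov transforms genuinely decouple, so that the $Q$-intensity of $\tau$ equals $\lambda^{*}$ while $\tilde W$ remains a $Q$-Brownian motion. Both hinge on the independence of $\tau$ and $\ccF_{T^{*}}$ under $P'$ together with the fact that $Z^{\lambda^{*}}$ is determined by $\tau$ and the $\bbF$-predictable process $\lambda^{*}$, whereas $z^{\theta^{*}}$ is purely $\bbF$-adapted, so that the product density factorises cleanly. A secondary technical issue is the representation of general $\bbG$-density processes as products of the above form in the necessity direction, which I would settle via the predictable representation property for the progressively enlarged filtration.
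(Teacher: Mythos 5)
Your proposal takes essentially the same route as the paper. The sufficiency direction coincides with the paper's proof: one sets $\lambda^*_t = f(t,t)-r_t$, builds the candidate measure from the product density $Z^{\lambda^*}_{T^*} z^{\theta^*}_{T^*}$ (the paper changes measure from an arbitrary $P^\lambda\in\bar\ccP$ rather than directly from $P'$, which is immaterial since all these measures are equivalent), applies Girsanov separately to the point process and to $W$, and checks that the predictable part of the discounted bond-price dynamics vanishes exactly under condition (ii). The only divergence is the necessity direction: the paper extracts it from the same computation (the discounted price is a local martingale if and only if the predictable part vanishes) for the specific constructed measure $P^*$, whereas you propose to decompose the density of an \emph{arbitrary} ELMM via predictable representation in the enlarged filtration --- a stronger claim than the paper actually establishes, and one whose representation property in $\bbG$ would need separate justification. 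The true-martingale and decoupling issues you flag for $Z^{\lambda^*} z^{\theta^*}$ are real but are left implicit in the paper as well; the independence of $\tau$ and $\ccF_{T^*}$ under $P'$ is indeed what makes them go through.
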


\begin{proof}
Fix $P^\lambda \in \bar \ccP$.
Condition (i) guarantees that $z^{\theta^*}$ is a density process for a change of measure via the Girsanov theorem. We define  
\begin{equation*}
Z^{*}_{T^*} := \begin{cases}
\exp\Big( \int_0^t (1-\lambda^*_s) \lambda_s ds \Big),  & t < \tau \\
\lambda^{*}_\tau \exp\Big( \int_0^\tau (1-\lambda^{*}_s) \lambda_s ds \Big) & t \ge \tau,
\end{cases}
\end{equation*}
being the density process with regards the change in intensity from $\lambda$ to  $\lambda^*$, for $\int_{0}^{t} \lambda^{*}_sds < \infty$, see \cite[Theorem VI.2.T2]{Bremaud1981}. We may define
$$ dP^* := Z^{\lambda^*}_{T^*} z^{\theta^*}_{T^*} dP^\lambda . $$
That is, $P^* \sim P^\lambda $. We now show that $P^*$ is also a local martingale measure. First, note that 
\begin{equation*}
\label{intensityBm}
W_t^{*}:= W_t - \int_{0}^{t}\theta^*_s ds \quad \forall t \in [0,T^*]
\end{equation*}
is a $P^*$-Brownian motion.
Recall for $\{t < \tau \}$,
\begin{align*}
\frac{dP(t, T)}{P(t-,t)}{}&= \left(-\lambda^*(t) + f(t,t)+ \frac{1}{2} \left\| \overline{b}(t,T) \right\|^2-  \overline{a}(t,T) \right)dt\\ 
& + dM^\lambda  - \overline{b}(t,T)dW_t.
\end{align*}
Hence under the change of measure,
\begin{align*}
\frac{dP(t, T)}{P(t-,t)}{}&= \left(-\lambda^*(t) + f(t,t)+ \frac{1}{2} \left\| \overline{b}(t,T) \right\|^2-  \overline{a}(t,T) 
- \bar b(t,T) \theta ^* _t\right)dt\\ 
& + dM^\lambda - \overline{b}(t,T)dW^*_t.
\end{align*}
After discounting with $\gamma$, $\gamma^{-1} P(.,T)$ is a local martingale if and only if the predictable part in its semimartingale decomposition vanishes. Setting $T=t$, condition (ii) together with the definition of $\lambda^*$ holds. Thus, $P^*$ is an ELMM for $P^\lambda$. As $P^\lambda$ is arbitrary, $P^*$ is an ELMM with respect to $\bar \ccP$ and we conclude. 
\end{proof}

\subsection{Recovery of market value}
\label{sec:recovery}
In reduced-form models, there are some recovery assumptions, such as the zero recovery, fractional recovery of treasury, fractional recovery of par value, see \cite[Chapter 8]{BieleckiRutkowski2002} for detail. We have so far considered the case where the credit risky bond becomes worthless and there is zero recovery as soon as default event occurs. Here, we will consider the {\em fractional recovery of market value} where the credit risky bond looses a fraction of its market value. We assume that there is ambiguity on the recovery process. The goal is to obtain the necessary and sufficient conditions for the existence of an ELMM for the family $\{(P_{R}(t,T))_{0\leq t \leq T}; T\in [0, T^*]\}$ with respect to the numeraire $\gamma = \exp \left(\int_0^{\cdot} r_t dt \right)$ and the set of probability measures $\bar{\ccP}$. Thus, extending Theorem \ref{thm:no-arbitrage} to general recovery schemes.
To this end, we assume that on the given probability space $(\Omega,\ccG,P')$, there is additionally a marked point process $(T_n,R_n)_{n \ge 1}$ where the random times $T_n\to \infty$ as $n \to \infty$, which is independent of $W$ and $\tau$ under $P'$. The associated \emph{recovery process} is denoted by
$$ R_t = \prod_{T_n \le t} R_n .$$ 
We assume that $0<T_1< \cdots$ are the jumping times from a Poisson process with intensity one, the recovery values $(R_n)$ are i.i.d.\ with uniform distribution in $[\underline r, \bar r]\subset(0,1]$ (independent from the jumping times). Then,  $R=(R_t)_{t \ge 0}$ is non-increasing and $R_t>0$ for all $t \ge 0$. 
 
The filtration $\mathbb G=(\ccG_t)_{0 \le t \le T^*}$ is in analogy to the setting of Section \ref{sec:Ambiguity} and is obtained by a progressive enlargement with default information (in this case $R$), i.e.,
\begin{align*}
	\ccF_t = \bigcap_{\epsilon >0} \sigma( R_s , W_s: 0 \le s \le t+\epsilon), \qquad 0 \le t \le T^*.
\end{align*}
We assume that $\ccG=\ccG_{T^*}$. As $R$ is a $\bbG$-submartingale which is stochastically continuous, there is a multiplicative Doob-Meyer decomposition, i.e.,\ there exists a $\bbG$-predictable, positive process $h$, such that 
$$ R_t e^{\int_0^t h_s ds}, \quad t \ge 0$$
is a $\bbG$-martingale. Here the process $e^{\int_0^\cdot h_s ds}$ is the \emph{exponential compensator} of $R$ (indeed it is a simple exercise to compute $h$).

Again, we define admissible densities with respect to $\bar \ccP$ through
$$ \bar \ccA_R := \{ h^*: h^* \text{ is }\bbF\text{-predictable and }E^{P}[Z^{h^*}_{T^*}]<\infty\text{ for all }P \in \bar \ccP\}. $$
The associated densities $Z^{h^*}$ for $h^*\in\bar \ccA_R$ are the possible densities for equivalent measure changes when starting from any measure $P\in\bar \ccP$.  

Under this assumption of fractional recovery of market value, the term structure of credit risky bond prices can be assumed to be of the form 
\begin{equation}
\label{Recoverybondprice}
P_{R}(t, T) = R_t \exp\left(-\int_{t}^{T} f(t,u)du \right), \qquad  0 \leq t \leq T \le T^*.
\end{equation}

\begin{remark}
If a default occurs at $t$, the bond loses a random fraction $q_{t}=1-R_t$ of its pre-default value, where $(q_s)_{[0,T^*]}$ is a predictable process with values in $[a,b] \in [0,1)$. Thus, the value of $(1-q_{t})P(t-, T)$ is immediately available to the bond owner at default. It is still subject to default risk because of the possibly following defaults given by $\{ T_n:T_n> t \}$.
\end{remark}
 
\begin{theorem}
\label{ambiguity_recovery}
Let $h^*_t := f(t,t)-r_t, \text{ for } t \in [0, T^*]$. Assume that Assumption \ref{parameter} holds and 
\begin{enumeratei}
\item there exists an $\bbF$-progressive $\theta^*$ such that $E'[z^{\theta^*}_{T^*}]=1$,
\item the drift condition 
$$\bar a(t,T) = \half \parallel \bar b(t,T) \parallel^2 - \bar b(t,T) \theta^*_t,
$$
holds $dt\otimes dP'$-almost surely on $\{t<\tau\}$.
\end{enumeratei}
Then there exists an ELMM with respect to $\bar \ccP$.
\end{theorem}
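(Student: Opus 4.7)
The plan is to adapt the proof of Theorem \ref{thm:no-arbitrage} to the recovery setting, with the multiplicative recovery process $R$ and its predictable exponent $h$ playing the role that $1-H$ and the intensity $\lambda$ played in the zero-recovery case. Starting from an arbitrary $P^\lambda \in \bar\ccP$, I would construct an ELMM $P^* \sim P^\lambda$ by combining a Girsanov transform on $W$ with a Girsanov-type transform on the marked point process driving $R$, and then check that the drift of the discounted bond price vanishes.

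The first step is to compute the semimartingale dynamics of $P_R(t,T) = R_t F(t,T)$, where $F(t,T) := \exp(-\int_t^T f(t,u)\,du)$. Since $R$ is pure-jump (piecewise constant between the Poisson jump times $T_n$) and $F(\cdot,T)$ is a continuous Itô process, the covariation $[R,F(\cdot,T)]$ vanishes. Integration by parts, combined with the HJM dynamics of $F$ and the multiplicative Doob-Meyer decomposition $dR_t = -R_{t-} h_t\,dt + dM^R_t$ for some $\bbG$-martingale $M^R$, yields
\begin{align*}
\frac{dP_R(t,T)}{P_R(t-,T)} = \Big(-h_t + f(t,t) - \bar a(t,T) + \tfrac{1}{2}\|\bar b(t,T)\|^2\Big)\,dt - \bar b(t,T)\,dW_t + \frac{dM^R_t}{R_{t-}}.
\end{align*}

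The second step is the change of measure. As in Theorem \ref{thm:no-arbitrage}, condition (i) provides $z^{\theta^*}$ so that $W^*_t = W_t - \int_0^t \theta^*_s\,ds$ is a Brownian motion under the transformed measure. In parallel, the Girsanov-type theorem for marked point processes (Theorem VI.2.T2 in \cite{Bremaud1981}) allows the construction of a density $Z^{h^*}\in\bar\ccA_R$ that changes the predictable exponent of $R$ from $h$ to $h^*_t := f(t,t) - r_t$. Setting $dP^* := Z^{h^*}_{T^*}\,z^{\theta^*}_{T^*}\,dP^\lambda$, the drift of the discounted bond price $\gamma^{-1} P_R(\cdot,T)$ under $P^*$ reads
\begin{align*}
\Big(-r_t - h^*_t + f(t,t) - \bar a(t,T) + \tfrac{1}{2}\|\bar b(t,T)\|^2 - \bar b(t,T)\theta^*_t\Big)\,dt,
\end{align*}
where the first three terms cancel by the definition of $h^*$ and the remaining three vanish by condition (ii). Since $P^\lambda$ was arbitrary and $P^* \sim P' \sim P^\lambda$, this exhibits an ELMM with respect to $\bar\ccP$.

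The main obstacle will be the rigorous implementation of the Girsanov transform for the marked point process driving $R$: one must verify that $h^*$ lies in $\bar\ccA_R$ so that $Z^{h^*}$ is a true martingale rather than merely a local one, that this transform is compatible with the Brownian transform via $z^{\theta^*}$ (leveraging the independence of $W$ and the marked point process under $P'$), and that the product $Z^{h^*}\,z^{\theta^*}$ defines a valid density. A further point to confirm is that under the change from $h$ to $h^*$ the structural properties of $R$ remain intact, so that the multiplicative Doob-Meyer decomposition used in Step~1 is still in force under $P^*$.
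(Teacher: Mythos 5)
Your proposal follows essentially the same route as the paper's proof: integration by parts for $P_R(t,T)=R_tF(t,T)$ with $[R,F(\cdot,T)]=0$, the multiplicative Doob--Meyer decomposition of $R$, a combined Girsanov transform $dP^*=Z^{h^*}_{T^*}z^{\theta^*}_{T^*}\,dP^\lambda$ shifting the exponential compensator from $h$ to $h^*$ and drifting $W$, and cancellation of the discounted drift via the definition of $h^*$ and condition (ii). The technical caveats you flag at the end (membership of $h^*$ in $\bar\ccA_R$, validity of the product density) are likewise left implicit in the paper.
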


\begin{proof}
Fix $P^{\lambda} \in \bar \ccP$.
$Q^* \sim P^{\lambda}$ if $$ dQ^* := Z^{h^*}_{T^*} z^{\theta^*}_{T^*} dP^{\lambda},$$ 
and 
\begin{equation*}
Z^{*}_{T^*} := \begin{cases}
\exp\Big( \int_0^t (1-h^*_s) h_s ds \Big),  & t < \tau \\
\lambda^{*}_\tau \exp\Big( \int_0^\tau (1-h^{*}_s) h_s ds \Big) & t \ge \tau.
\end{cases}
\end{equation*}
is the density process with regards the change in intensity from $h$ to  $h^*$, for $\int_{0}^{t} h^{*}_sds < \infty$.
We now show that $Q^*$ is also a local martingale measure. 
Recall, 
By definition of $(h^*_s)_{s\geq 0}$, we have that $R_t e^{\int_0^t h_s^* ds},$ $t \ge 0$ is a $\bbG$-martingale, which implies that 
 $$dM_t= e^{\int_0^\cdot h^*_s ds}(R_{t-}h^*_t dt+dR_t)$$
is the differential of a $\bbG$-martingale. Set $F(t,T)= \exp\left(-\int_{t}^{T} f(t,u)du \right)$. Then \eqref{Recoverybondprice}
can be written as $P_R(t, T)= R(t) F(t,T).$ Integrating by part yields 
$$dP_R(t, T) = F(t,T)dR_t + R(t-) dF(t,T)+ d[R,F(\cdot,T)]_t=(1)+(2)+(3).$$
For $\{t < \tau \}$,  
$$(1)= F(t,T)dR_t=P_R(t-, T) ( (R_{t-})^{-1} e^{-\int_0^t h_s^* ds}dM_t -h^*_tdt). $$
$$(2)=  R(t-) dF(t,T)= P_R(t-, T)\left(\left( f(t,t)+ \frac{1}{2} \left\| \overline{b}(t,T) \right\|^2-  \overline{a}(t,T) \right)dt- \overline{b}(t,T)dW_t\right).$$
$(3)=0$. Thus, 
\begin{align*}
\frac{dP_{R}(t, T)}{P_{R}(t-,t)}{}&= \left(-h^*(t) + f(t,t)+ \frac{1}{2} \left\| \overline{b}(t,T) \right\|^2-  \overline{a}(t,T) \right)dt\\ 
& + \left(\frac{e^{-\int_0^t h^* ds}}{R_{t-}}\right)dM_t - \overline{b}(t,T)dW_t.
\end{align*}
Introducing the change of measure on the Brownian motion, 
\begin{align*}
\frac{dP_{R}(t, T)}{P_{R}(t-,t)}{}&= \left(-h^*(t) + f(t,t)+ \frac{1}{2} \left\| \overline{b}(t,T) \right\|^2-  \overline{a}(t,T)-\overline{b}(t,T)\theta^* \right)dt\\ 
& + \left(\frac{e^{-\int_0^t h^* ds}}{R_{t-}}\right)dM_t - \overline{b}(t,T)dW^{*}_t.
\end{align*}
After discounting with $\gamma$, $\gamma^{-1} P_R(.,T)$ is a local martingale if and only if the predictable part is zero, that is, 
\begin{equation*}
\label{driftcondition}
-r_t-h^*(t) + f(t,t)+ \frac{1}{2} \left\| \overline{b}(t,T) \right\|^2-  \overline{a}(t,T)-\overline{b}(t,T)\theta^*  =0 \quad \forall t\leq T.
\end{equation*}
This is needed only for $t\leq \tau$. This is due to the assumption that the recovery value is instantaneously paid to the bond holder.
Since the above equation hold for $t\leq \tau \wedge T$ and 
\begin{equation*}
\frac{1}{2} \left\| \overline{b}(t,T) \right\|^2-  \overline{a}(t,T)-\overline{b}(t,T)\theta^* =0
\end{equation*}
 if $T=t$, condition (ii) together with the definition of $h^*$ holds, and the results follows. 
\end{proof}

\section{Robust bond pricing interval}
\label{sec:Robust}  
Under the assumption of (robust) no-arbitrage as formalized in Definition \ref{def:arbitrage}, the  (zero-recovery) bond price at time $t$ which pays a unit at maturity $T$ is given by an expectation under an ELMM $Q$ due to Theorem \ref{thm:FTAP-NFLV_homogeneous}. Hence, 
\begin{equation}
\label{eq:Bond-price}
P(t, T)= \ind{\tau > t}E^Q\left[e^{-\int_{t}^{T}(r_s + \lambda^{*} _s)ds}| \ccF_t \right], \qquad 0 \le t \le T;
\end{equation}
here $\lambda^*$ lies necessarily in $[\underline \lambda, \overline \lambda]$. Our goal here is to specify a polynomial process  which satisfies this requirement and to provide pricing formulas for the computation of the expectation in \eqref{eq:Bond-price}.

Following the work of \cite{Delbaen2002}, let us assume that $\lambda^*$ is the unique strong solution of the SDE
\begin{equation}
\label{eq:Jacobi}
d\lambda^{*}_t = \alpha (\lambda_{\mu} - \lambda^{*}_t)dt + \beta \sqrt{(\lambda^{*}_t -\underline{\lambda})(\bar{\lambda} -\lambda^{*}_t) }dW_t, \quad  \lambda_0 \in [\underline{\lambda}, \bar{\lambda}].
\end{equation}
Here, $W$ is a Brownian motion on the probability space $(\Omega, \ccF, Q )$ endowed with the canonical filtration $\mathbb{F}=(\ccF_t)_{0\leq t\leq T^*}$ generated by $W$ that satisfies the usual conditions. We assume that $\alpha, \beta >0$ and $\underline{\lambda} < \lambda_{\mu} < \bar{\lambda}$ which guarantee the existence of a stationary distribution. By definition, the drift function $\mu (x)= \alpha (\lambda_{\mu} - x)$ is Lipschitz continuous.
 Let the volatility function $\sigma (x)= \beta \sqrt{(x -\underline{\lambda})(\bar{\lambda} -x)}$, then for $x,y \in [\underline{\lambda}, \bar{\lambda}]$, 
$$|\sigma (x)^2 - \sigma (y)^2| = \beta |\underline{\lambda} + \bar{\lambda}-(x+y)||x-y|\leq \beta  (\bar{\lambda} - \underline{\lambda})|x-y|.$$
Thus, $\sigma (x)$ is H\"older-$\frac{1}{2}$ continuous. The continuity properties of $\mu (x)$ and $\sigma (x)$  guaranteed the pathwise uniquessness of \eqref{eq:Jacobi} using the general uniqueness theorem (Theorem 4.5, \cite{RevuzYor}). The following pricing formula was obtained in Theorem 3.1 in \cite{Delbaen2002}. Let 
$$ B(t,T) = E^Q\left[e^{-\int_{t}^{T} \lambda^{*}_s ds}|\lambda^{*}_t= \lambda \right]. $$

\begin{theorem}
\label{Bond_Delbaen} Under \eqref{eq:Jacobi} it holds that
\begin{align*}
B(t,T)
= & e^{-\underline{\lambda} (T-t)} \begin{cases} 1+ \sum_{n=1}^{\infty} (\bar{\lambda}-\underline{\lambda})^{n} \cdots\\ \sum_{(v_n,\cdots, v_1) \in V^n} \psi_{v_n}(\frac{\lambda-\underline{\lambda}}{\bar{\lambda}-\underline{\lambda}}) \prod_{j=n}^{1}k_{v_{j}}q (v_j-v_{j-1})I_{t,T}^{n}(y_{v_n},\cdots, y_{v_1})\end{cases}
\end{align*}
where 
\begin{align*}
V^n & = \{(v_n, \cdots, v_1) \in {\mathbb Z}_{+}^{n}: |v_j-v_{j-1}| \leq 1, 1 \leq j \leq n, v_0=0\},  \\
q(v_j, v_{j-1}) &= \begin{cases} \frac{(2v(a+b+v-1)+a(a+b-2))\Gamma^{2}(a)v! \Gamma(b+v)}{(a+b+2v-1)(a+b+2v-2)(a+b+2v-3)\Gamma(a+v-1)\Gamma(a+b+v-2)}&\quad \text{  if } v_j= v_{j-1}\\ -\frac{v! \Gamma^2(a)\Gamma(b+v)}{(a+b+2v-1)(a+b+2v-2)(a+b+2v-3)\Gamma(a+v-1)\Gamma(a+b+v-2)} & \quad\text{  if }|v_j -v_{j-1}|=1\end{cases}\\
\text{ for } v= v_j \vee v_{j-1}, \\
I_{t,T}^{n}(y_{v_n},\cdots, y_{v_1})&= \int_{t}^{T}\int_{s_n}^{T} \cdots \int_{s_2}^{T} \exp\{-\sum_{j=n}^{1}y_j(s_j-s_{j+1}) \} ds_1 \cdots ds_n,
\end{align*}
with $s_{n+1}=t$.
\end{theorem}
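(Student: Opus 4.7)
My plan is to follow the spectral approach of \cite{Delbaen2002}: Taylor-expand the exponential functional, reduce the problem to multi-time correlations $E^{Q}[X_{s_1}\cdots X_{s_n}\mid X_t=y]$, and evaluate these iteratively using the fact that Jacobi polynomials diagonalise the Jacobi semigroup.

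First I would normalise by setting $X_s := (\lambda^{*}_s - \underline{\lambda})/(\overline{\lambda} - \underline{\lambda})$, so that $X$ is a Jacobi diffusion on $[0,1]$ with Beta stationary law (parameters $a,b$ determined by $\alpha,\beta,\lambda_\mu$). This rewrites
\begin{equation*}
B(t,T) = e^{-\underline{\lambda}(T-t)}\,E^{Q}\!\left[\exp\!\left(-(\overline{\lambda}-\underline{\lambda})\int_t^T X_s\,ds\right)\,\bigg|\,X_t = \tfrac{\lambda-\underline{\lambda}}{\overline{\lambda}-\underline{\lambda}}\right],
\end{equation*}
which accounts for the prefactor $e^{-\underline{\lambda}(T-t)}$ and the powers $(\overline{\lambda}-\underline{\lambda})^{n}$ in the formula. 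Since $0\le X_s\le 1$, Fubini justifies exchanging expectation and Taylor series; combined with the identity $\left(\int_t^T X_s\,ds\right)^n = n!\int_{t<s_n<\cdots<s_1<T}X_{s_1}\cdots X_{s_n}\,ds_1\cdots ds_n$ this reduces the task to evaluating the multi-time correlation $E^{Q}[X_{s_n}\cdots X_{s_1}\mid X_t = y]$ on the simplex.

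The core of the argument uses two classical facts about the Jacobi polynomials $\{\psi_v\}_{v\ge 0}$ on $[0,1]$ (orthogonal with respect to the stationary Beta density): they diagonalise the generator of $X$ with explicit eigenvalues $y_v$, so that $E^{Q}[\psi_v(X_s)\mid X_r=x] = e^{-y_v(s-r)}\psi_v(x)$; and they satisfy a three-term recurrence $x\,\psi_v(x) = \sum_{|w-v|\le 1} c_{v,w}\,\psi_w(x)$ whose coefficients are closed-form ratios of Gamma functions. Starting from $\psi_0\equiv 1$ at the innermost time $s_1$ (this is the label $v_0=0$ in the statement) and processing the product $X_{s_n}\cdots X_{s_1}$ outwards towards $t$, at step $j$ one first applies the three-term recurrence to absorb the factor $X_{s_j}$, picking up a sum over neighbours $v_{j-1}\mapsto v_j$ with $|v_j-v_{j-1}|\le 1$ (this is exactly the constraint defining $V^n$), and then applies the Markov semigroup to advance from $s_j$ to $s_{j+1}$, producing a factor $e^{-y_{v_j}(s_j-s_{j+1})}$. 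After $n$ iterations the expectation becomes $\psi_{v_n}(y)$ times a product of recurrence coefficients and exponentials; integrating over the simplex recognises $I^{n}_{t,T}(y_{v_n},\ldots,y_{v_1})$, and the signs from $(-1)^{n}$ in the Taylor expansion combined with those of the Jacobi recurrence coefficients reproduce the explicit signs of $q$ in the statement.

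The main technical burdens are twofold. First, identifying the closed forms of $k_v$, $q(v,w)$ and $y_v$ with the Gamma-function ratios in the theorem is a routine but tedious Jacobi-polynomial computation (three-term recurrence coefficients and norms $\|\psi_v\|^2$), which I would import directly from Theorem 3.1 of \cite{Delbaen2002}. Second, one must justify absolute convergence of the double series over $n$ and paths $(v_n,\ldots,v_0)\in V^n$. I would handle this by combining $|V^n|\le 3^{n}$, uniform bounds on the recurrence coefficients, the polynomial bound $|\psi_v(y)|\le C(1+v)^{p}$ on $[0,1]$, and the pointwise estimate $|e^{-y_{v_j}(s_j-s_{j+1})}|\le 1$; after integrating over the simplex the whole series is dominated by a constant multiple of $\sum_n (3(\overline{\lambda}-\underline{\lambda}))^n n^{p}(T-t)^n/n!$, which converges. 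The bound $0\le X_s\le 1$ coming from the normalisation is essential at both steps.
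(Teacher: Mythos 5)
The paper does not prove this theorem at all: it is imported verbatim as Theorem 3.1 of \cite{Delbaen2002}, so there is no internal proof to compare against. Your sketch is a correct reconstruction of the argument in that reference --- normalising to a Jacobi diffusion on $[0,1]$, Taylor-expanding the exponential functional over the ordered simplex, and iterating the three-term recurrence together with the eigenfunction relation $E^{Q}[\psi_v(X_s)\mid X_r=x]=e^{-y_v(s-r)}\psi_v(x)$, which reproduces exactly the index structure $v_0=0$, $|v_j-v_{j-1}|\le 1$ and the kernels $I^n_{t,T}$. The only place where you should be a little more careful is the convergence estimate: the recurrence coefficients and the sup-norms $\sup_{[0,1]}|\psi_v|$ of the (orthonormal) Jacobi polynomials are not uniformly bounded but grow at most polynomially in $v$, and since $v_j\le n$ along any path in $V^n$ this still yields a dominating series of the form $\sum_n C^n n^{p}(T-t)^n/n!$; stating that growth bound explicitly (rather than ``uniform bounds'') closes the gap, and the coefficient identifications you defer to \cite{Delbaen2002} are exactly what the paper itself defers.
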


The bond price can now be approximated by the truncated sum of series from  Theorem \ref{Bond_Delbaen}, i.e.~
\begin{align*}
B^j(t, T):=
 & e^{-\underline{\lambda} (T-t)} \begin{cases} 1+ \sum_{n=1}^{j} (\bar{\lambda}-\underline{\lambda})^{n} \cdots\\ 
 \sum_{(v_n,\cdots, v_1) \in V^n} \psi_{v_n}(\frac{\lambda-\underline{\lambda}}{\bar{\lambda}-\underline{\lambda}}) \prod_{j=n}^{1}k_{v_{j}}q (v_j-v_{j-1})I_{t,T}^{n}(y_{v_n},\cdots, y_{v_1})\end{cases}.
\end{align*}
 
Proposition 4.1 \cite{Delbaen2002} shows that the truncated sum up to second order, and it turns out that the volatility coefficient $\beta$ appears only for $j=2$. Thus, one should consider at least $j=2$, i.e.~$P^{2}(t,T)$ to take into account the volatility coefficient in the approximation result.  $P^{0}= e^{-\underline{\lambda} (T-t)}$ is an obvious upper bound of the no-arbitrage bond price for any initial default intensity. Since the default intensity has a bounded support, one can as well derive the lower and the upper bounds for the no-arbitrage bond prices. The following result is Theorem 4.2 in \cite{Delbaen2002}.

\begin{theorem}
\label{thm:bond-bounds}
\begin{enumeratei}
\item Lower bound: 
\begin{equation*}
\exp\left( -\lambda_{\mu} (T-t)-(\lambda -\lambda_{\mu})\frac{1-e^{-\alpha(T-t)}}{\alpha} \right) \leq B(t,T)
\end{equation*}
\item Upper bound 
\begin{align*}
B(t,T) \leq \left(1-\gamma -(z-\gamma)) \frac{1-e^{-\alpha (T-t)}}{\alpha} \right) e^{-\underline{\lambda}(T-t)}\\
+ \left(\gamma + (z-\gamma) \frac{1-e^{-\alpha (T-t)}}{\alpha} \right) e^{-\bar{\lambda}(T-t)},
\end{align*}
where $z= \frac{\lambda -\underline{\lambda}}{\bar{\lambda}-\underline{\lambda}}$ and $\gamma = \frac{\lambda_{\mu}-\underline{\lambda}}{\bar{\lambda}-\underline{\lambda}}$.
\end{enumeratei}
\end{theorem}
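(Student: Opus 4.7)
My plan is to derive both bounds via convexity arguments applied to
\[ B(t,T) = E^Q\!\left[\, e^{-\int_t^T \lambda^*_s\,ds} \,\Big|\, \lambda^*_t = \lambda \,\right], \]
using only the linearity of the drift in \eqref{eq:Jacobi} together with the fact that $\lambda^*$ remains in $[\underline{\lambda},\bar{\lambda}]$. The key preliminary is that the conditional mean $m(s) := E^Q[\lambda^*_s \mid \lambda^*_t = \lambda]$ satisfies the linear ODE $m'(s) = \alpha(\lambda_\mu - m(s))$ with $m(t)=\lambda$. This follows from \eqref{eq:Jacobi} via Fubini, since the diffusion coefficient is bounded on $[\underline{\lambda},\bar{\lambda}]$ and hence the stochastic integral in \eqref{eq:Jacobi} is a true martingale. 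Solving and integrating gives
\[ \int_t^T m(s)\,ds \,=\, \lambda_\mu(T-t) + (\lambda-\lambda_\mu)\,\frac{1-e^{-\alpha(T-t)}}{\alpha}. \]

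For part (i), I apply Jensen's inequality (for the $Q$-conditional expectation) to the convex map $x \mapsto e^{-x}$ and use Fubini to exchange expectation and time-integral:
\[ B(t,T) \,\geq\, \exp\!\left(-\int_t^T m(s)\,ds\right), \]
which is exactly the asserted lower bound after substituting the formula above.

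For part (ii), I exploit a second, different convexity step. Set $\bar\Lambda := \tfrac{1}{T-t}\int_t^T \lambda^*_s\,ds$, which again takes values in $[\underline{\lambda},\bar{\lambda}]$. The map $x \mapsto e^{-x(T-t)}$ is convex, and on $[\underline{\lambda},\bar{\lambda}]$ is therefore dominated by the chord joining its endpoint values:
\[ e^{-x(T-t)} \,\leq\, \frac{\bar{\lambda}-x}{\bar{\lambda}-\underline{\lambda}}\,e^{-\underline{\lambda}(T-t)} + \frac{x-\underline{\lambda}}{\bar{\lambda}-\underline{\lambda}}\,e^{-\bar{\lambda}(T-t)}. \]
Substituting $x=\bar\Lambda$ and taking the $Q$-conditional expectation turns the left-hand side into $B(t,T)$ and the right-hand side into an affine function of $E^Q[\bar\Lambda \mid \lambda^*_t=\lambda] = \tfrac{1}{T-t}\int_t^T m(s)\,ds$. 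Re-expressing this affine combination in the normalized coordinates $z=(\lambda-\underline{\lambda})/(\bar{\lambda}-\underline{\lambda})$ and $\gamma=(\lambda_\mu-\underline{\lambda})/(\bar{\lambda}-\underline{\lambda})$ yields the stated upper bound.

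The main obstacle I anticipate is justifying that $\lambda^*_s \in [\underline{\lambda},\bar{\lambda}]$ for all $s$, $Q$-almost surely: this is essential both for the chord inequality in (ii) to apply pathwise and for the resulting convex combination to have non-negative coefficients. Boundary preservation relies on the pathwise uniqueness noted just before the statement combined with the vanishing of $\sigma$ at the two endpoints, and has to be invoked carefully. All remaining steps reduce to the elementary ODE for $m(s)$ and direct integration.
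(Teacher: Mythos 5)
Your argument is correct and complete, and it is in fact the standard proof of this result: the paper itself gives no proof of Theorem \ref{thm:bond-bounds} but simply imports it as Theorem 4.2 of Delbaen--Shirakawa, whose derivation is exactly your two convexity steps (Jensen's inequality below, the chord inequality on the bounded support above), fed by the closed-form conditional mean $m(s)=\lambda_\mu+(\lambda-\lambda_\mu)e^{-\alpha(s-t)}$ obtained from the affine drift. Your flagged obstacle --- that $\lambda^*_s\in[\underline\lambda,\bar\lambda]$ almost surely --- is genuine but standard for the Jacobi diffusion: the diffusion coefficient is only defined on that interval and vanishes at both endpoints, so the (pathwise unique) solution cannot exit, and boundedness of $\sigma$ on the state space makes the stochastic integral a true martingale, justifying the Fubini step for $m$.

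One discrepancy is worth recording. Your chord argument produces the weight
\begin{equation*}
E^Q\!\left[\tfrac{1}{T-t}\textstyle\int_t^T\lambda^*_s\,ds\,\Big|\,\lambda^*_t=\lambda\right]
=\lambda_\mu+(\lambda-\lambda_\mu)\,\frac{1-e^{-\alpha(T-t)}}{\alpha(T-t)},
\end{equation*}
so the coefficient multiplying $e^{-\bar\lambda(T-t)}$ is $\gamma+(z-\gamma)\frac{1-e^{-\alpha(T-t)}}{\alpha(T-t)}$, with an extra factor $T-t$ in the denominator compared with the display in the theorem. The version printed in the paper (denominator $\alpha$ alone) cannot be right: for small $\alpha$ and $z>\gamma$ the first coefficient $1-\gamma-(z-\gamma)\frac{1-e^{-\alpha(T-t)}}{\alpha}$ becomes negative for large $T-t$, making the claimed upper bound negative while $B(t,T)>0$. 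Together with the unbalanced parentheses in that display, this is a typo in the statement; your derivation gives the correct form, in which both coefficients are convex weights lying in $[0,1]$.
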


\begin{remark}[Recovery] 
According to Theorem \ref{ambiguity_recovery}, one obtains an immediate generalization to fractional recovery of market value when replacing $\lambda^*$ in the above calculations by $h^*$.
\end{remark}


\begin{thebibliography}{10}
	
	\bibitem{AksamitJeanblanc}
	A.~Aksamit and M.~Jeanblanc.
	\newblock {\em Enlargements of Filtration with Finance in view}.
	\newblock SpringerBriefs in Quantitative Finance, 2017.
	
	\bibitem{ArtznerDelbaen95}
	P.~Artzner and F.~Delbaen.
	\newblock Default risk insurance and incomplete markets.
	\newblock {\em Mathematical Finance}, 5(3):187--195, 1995.
	
	\bibitem{BieleckiRutkowski2002}
	T.~Bielecki and M.~Rutkowski.
	\newblock {\em Credit Risk: Modeling, Valuation and Hedging}.
	\newblock Springer Verlag. Berlin Heidelberg New York, 2002.
	
	\bibitem{Bremaud1981}
	P.~Br\'emaud.
	\newblock {\em Point Processes and Queues}.
	\newblock Springer Verlag. Berlin Heidelberg New York, 1981.
	
	\bibitem{Riedel}
	X.~{Cheng} and F.~{Riedel}.
	\newblock {Optimal stopping under ambiguity in continuous time.}
	\newblock {\em {Mathematics and Financial Economics}}, 7(1):29--68, 2013.
	
	\bibitem{DelbaenSchachermayer1994}
	F.~Delbaen and W.~Schachermayer.
	\newblock A general version of the fundamental theorem of asset pricing.
	\newblock {\em Math. Ann.}, 300(3):463--520, 1994.
	
	\bibitem{DelbaenSchachermayer1998}
	F.~Delbaen and W.~Schachermayer.
	\newblock The fundamental theorem of asset pricing for unbounded stochastic
	processes.
	\newblock {\em Math. Ann.}, 312:215--250, 1998.
	
	\bibitem{Delbaen2002}
	F.~Delbaen and H.~Shirakawa.
	\newblock An interest rate model with upper and lower bounds.
	\newblock {\em Asia-Pacific Financial Markets}, 9(3):191--209, Sep 2002.
	
	\bibitem{DuffieSchroderSkiadas96}
	D.~Duffie, M.~Schroder, and C.~Skiadas.
	\newblock Recursive valuation of defaultable securities and the timing of
	resolution of uncertainty.
	\newblock {\em Ann. Appl. Probab.}, 6(4):1075--1090, 1996.
	
	\bibitem{eberlein-oezkan-03}
	E.~Eberlein and F.~Oezkan.
	\newblock The defaultable {L\'evy} term structure: Ratings and restructuring.
	\newblock {\em Mathematical Finance}, 13:277--300, 2003.
	
	\bibitem{Ellsberg1961}
	D.~Ellsberg.
	\newblock Risk, ambiguity, and the savage axioms.
	\newblock {\em The Quarterly Journal of Economics}, 75(4):643--669, 1961.
	
	\bibitem{Filipovic2009}
	D.~Filipovi\'{c}.
	\newblock {\em Term Structure Models: A Graduate Course}.
	\newblock Springer Verlag. Berlin Heidelberg New York, 2009.
	
	\bibitem{HJM}
	D.~Heath, R.~A. Jarrow, and A.~J. Morton.
	\newblock Bond pricing and the term structure of interest rates.
	\newblock {\em Econometrica}, 60:77--105, 1992.
	
	\bibitem{Knight1921}
	F.~H. Knight.
	\newblock Risk, uncertainty and profit.
	\newblock {\em New York: Hart, Schaffner and Marx}, 1921.
	
	\bibitem{Lando94}
	D.~Lando.
	\newblock {\em Three Essays on Contingent Claim Pricing}.
	\newblock PhD thesis, Cornell University, 1994.
	
	\bibitem{Merton1974}
	R.~Merton.
	\newblock On the pricing of corporate debt: the risk structure of interest
	rates.
	\newblock {\em Journal of Finance}, 29:449--470, 1974.
	
	\bibitem{NeufeldNutz:Measurability}
	A.~Neufeld and M.~Nutz.
	\newblock Measurability of semimartingale characteristics with respect to the
	probability law.
	\newblock {\em Stochastic Processes and their Applications}, 124(11):3819 --
	3845, 2014.
	
	\bibitem{Marcel}
	M.~Nutz.
	\newblock Random ${G}$-expectations.
	\newblock {\em The Annals of Applied Probability}, 23(5):1755--1777, 2013.
	
	\bibitem{RevuzYor}
	D.~Revuz and M.~Yor.
	\newblock {\em Continuous Martingales and Brownian Motion}.
	\newblock Springer Verlag. Berlin Heidelberg New York, 1999.
	
	\bibitem{Riedel2015}
	F.~Riedel.
	\newblock Financial economics without probabilistic prior assumptions.
	\newblock {\em Decisions in Economics and Finance}, 38(1):75--91, 2015.
	
	\bibitem{Vorbrink14}
	J.~{Vorbrink}.
	\newblock {Financial markets with volatility uncertainty.}
	\newblock {\em {Journal of Mathematical Economics}}, 53:64--78, 2014.
	
	\bibitem{Zhou}
	C.~Zhou.
	\newblock A jump-diffusion approach to modeling credit risk and valuing
	defaultable securities.
	\newblock {\em Finance and Economics Discussion Paper Series 1997/15, Board of
		Governors of the Federal Reserve System}, 1997.
	
\end{thebibliography}
\end{document}